\newtheorem{claim}{Claim}[section]
\newtheorem{theorem}[claim]{Theorem}
\newtheorem{definition}[claim]{Definition}
\numberwithin{equation}{section}
\begin{document}
\title[Pseudo orbit expansion for the resonance condition]{Pseudo orbit expansion for the resonance condition on quantum graphs and the resonance asymptotics}
\author{Ji\v{r}\'{i} Lipovsk\'{y}}
\address{Department of Physics, Faculty of Science, University of Hradec Kr\'{a}lov\'{e}, Rokitansk\'{e}ho 62, 500\,03 Hradec Kr\'{a}lov\'{e}, Czechia}
\email{jiri.lipovsky@uhk.cz}

\begin{abstract}
In this note we explain the method how to find the resonance condition on quantum graphs, which is called pseudo orbit expansion. In three examples with standard coupling we show in detail how to obtain the resonance condition. We focus on non-Weyl graphs, i.e. the graphs which have fewer resonances than expected. For these graphs we explain benefits of the method of ``deleting edges'' for simplifying the graph.
\end{abstract}

\maketitle

PACS: 03.65.Ge, 03.65.Nk, 02.10.Ox

\section{Introduction}
Resonances are a phenomenon, which occurs often in physics and can be easily understood heuristically. Nevertheless, studying it mathematically rigorously is more difficult. There are two main definitions of resonances~-- \emph{resolvent resonances} (poles of the meromorphic continuation of the resolvent into the non-physical sheet) and \emph{scattering resonances} (poles of the meromorphic continuation of the scattering matrix). Non-compact quantum graph, where halflines are attached to a compact part of the graph (see figure~\ref{fig0}), provides a good background for studying resonances. It has been proven in \cite{EL1} that on quantum graph the above two definitions are almost equivalent; to be precise, the set of resolvent resonances is equal to the union of the set of scattering resonances and the set of the eigenvalues supported only on the internal part of the graph. There is a large bibliography on resonances in quantum graphs; for the quantum chaos community e.g. the papers \cite{KoS3, KSc} might be interesting.

The pseudo orbit expansion is a powerful tool for the trace formula expansion and the secular equation on compact quantum graphs. We refer the reader to \cite{BHJ} and the references therein. The method has been recently adjusted to finding the resonance condition for non-compact quantum graphs \cite{Li1}. 

The resonance asymptotics on non-compact quantum graphs was first studied in \cite{DP}, where it was observed that some graphs do not obey expected Weyl behaviour and a criterion for distinguishing non-Weyl graphs with standard coupling has been obtained. This criterion was later generalized in \cite{DEL} to all possible couplings. Asymptotics for magnetic graphs was studied in \cite{EL3}. The paper \cite{Li1} shows how to find the constant by the leading term of the asymptotics for non-Weyl graphs and gives bounds on this constant. 

In the current note we continue in investigating this problem and illustrate the main results of \cite{Li1} on several simple examples with standard coupling. In particular, we focus on the pseudo orbit expansion and explain to the reader in detail how to construct the resonance condition by this method. The paper is structured as follows. In the second section we introduce the model and the resonance asymptotics. The third section deals with the pseudo orbit expansion. In the section~4 we give two theorems on the effective size of an equilateral graph. In all these three section we give theorems without proofs, which can be found in the referred papers. The only exception is the theorem~\ref{thm-vertexscat}, where a simpler proof than the general one from \cite{Li1} is given. The last three section are devoted the the examples of non-Weyl graphs. In all of them the resonance condition is obtained from a regular directed graph and then from a simplified graph after deleting some of its edges.

\section{Preliminaries}
\begin{figure}
\centering
\includegraphics[width=0.5\textwidth]{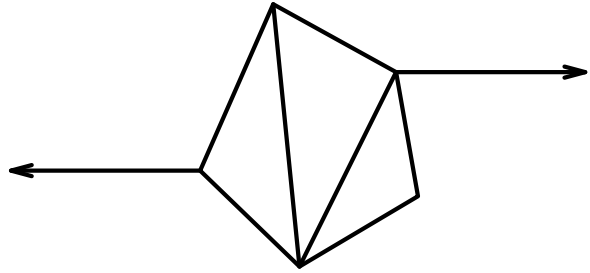}
\caption{Example of a non-compact graph with two halflines, which are denoted by lines with arrows.}
\label{fig0}
\end{figure} 
We assume a quantum graph with attached halflines. We consider a metric graph~$\Gamma$, which consists of the set of vertices~$\mathcal{V}$ and the set of edges~$\mathcal{E}$. There are $N$ finite edges, parametrized by the segments $(0,\ell_j)$, and $M$ infinite edges, parametrized by $(0,\infty)$. If $M \geq 1$, then we call the graph non-compact and by its compact part we denote the union of finite internal edges. Example of this non-compact graph is shown in figure~\ref{fig0}. The graph is equipped with the quantum Hamiltonian $H$ acting as negative second derivative with the domain consisting of functions in the Sobolev space $W^{2,2}(\Gamma)$ (this space consists of Sobolev spaces on the edges) which fulfill the coupling conditions at the vertices
\begin{equation}
  (U_j-I)\Psi_j + i (U_j+ I)\Psi_j' = 0\,.\label{eq-coupl}
\end{equation}
Here $U_j$ is a $d\times d$ unitary matrix ($d$ is the degree of the given vertex), $I$ is $d\times d$ unit matrix, $\Psi_j$ is the vector of limits of functional values at the given vertex and $\Psi_j'$ is the vector of limits of outgoing derivatives.

In this paper, we will mostly consider standard coupling (sometimes also known as Kirchhoff, free or Neumann)~-- in this case the functional values are continuous in the vertex and the sum of outgoing derivatives is equal to zero. The corresponding coupling matrix is $U_j = \frac{2}{d}J-I$, where $J$ is a $d\times d$ matrix with all entries equal to one.

We will study \emph{resolvent resonances}. The resolvent resonance is often defined as pole of the meromorphic continuation of the resolvent $(H-\lambda \mathrm{id})^{-1}$. We will use a simpler definition, proof that both definitions are equivalent can be done by the method of complex scaling \cite{EL1}.
\begin{definition}
We say that $\lambda  = k^2$ is a resolvent resonance of $H$ if there is a generalized eigenfunction $f$, which satisfies $ -f''(x) = k^2 f(x)$ on all edges of the graph, satisfies the coupling conditions (\ref{eq-coupl}) at the vertices and its restriction to each halfline is $\beta_j \,\mathrm{exp}(ikx)$.
\end{definition}

\begin{definition}
The counting function $N(R)$ gives the number of all resolvent resonances (including multiplicities) in the circle of radius~$R$ centered at the origin in the $k$-plane.
\end{definition}

Now we state the surprising result of Davies and Pushnitski \cite{DP} on the existence of graphs with non-Weyl asymptotics.
\begin{theorem}
For graphs with standard coupling the following bound on the counting function holds
$$
  N(R) = \frac{2}{\pi} W R+\mathcal{O}(1)\quad \mathrm{as\ }R\to \infty
$$
with $0\leq W \leq \mathrm{vol\,}\Gamma$, where $\mathrm{vol\,}\Gamma$ is the sum of the lengths of the internal edges of the graph. $W$ is strictly smaller than $\mathrm{vol\,}\Gamma$ iff there exists a balanced vertex, the vertex for which there is the same number of internal and external edges.
\end{theorem}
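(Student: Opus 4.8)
The plan is to realise the resonances as the zeros of an exponential polynomial $F(k)$ and to read off the two theorem assertions from the combinatorics of the standard coupling. First I would insert, on each internal edge $j$ identified with $(0,\ell_j)$, the Ansatz $f_j(x)=a_je^{ikx}+b_je^{-ikx}$ and, on each halfline, $f(x)=\beta_le^{ikx}$, and impose the coupling conditions (\ref{eq-coupl}) at every vertex; since a standard vertex of degree $d$ carries $d-1$ continuity conditions and the single condition that the outgoing derivatives sum to zero, this yields a homogeneous linear system of size $2N+M$ for the amplitudes, whose determinant I call $F(k)$. By the definition of a resolvent resonance, $k^2$ is a resonance with multiplicity exactly when $F(k)=0$. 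Expanding $F$ by the Leibniz rule, $F(k)=\sum_m c_m(k)e^{i\sigma_mk}$ with real exponents $\sigma_m$ and polynomial coefficients $c_m$, and since each internal edge $j$ can supply at most one of the factors $e^{\pm ik\ell_j}$ to any single term, all exponents satisfy $|\sigma_m|\le\mathrm{vol}\,\Gamma$.

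Second, I would invoke the counting theorem for exponential polynomials (classical, and in the precise form needed in \cite{DP}): if $\sigma_+$, $\sigma_-$ denote the largest and smallest exponents whose coefficient is not identically zero, then all zeros of $F$ lie in a horizontal strip and
$$
  N(R)=\frac{\sigma_+-\sigma_-}{\pi}\,R+\mathcal{O}(1),\qquad R\to\infty,
$$
so $W=\tfrac12(\sigma_+-\sigma_-)$. It then remains to show that $\sigma_-=-\mathrm{vol}\,\Gamma$ always, while $\sigma_+=\mathrm{vol}\,\Gamma$ exactly when no vertex is balanced; together with $\sigma_+\le\mathrm{vol}\,\Gamma$ and $\sigma_+\ge\sigma_-$ this yields $0\le W\le\mathrm{vol}\,\Gamma$ and the asserted equivalence.

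The core step is the evaluation of the coefficients of the extreme monomials $e^{\pm ik\,\mathrm{vol}\,\Gamma}$. The maximal total phase $\prod_j e^{ik\ell_j}$ forces every edge to attain its own maximal phase $e^{ik\ell_j}$, so $e^{+ik\,\mathrm{vol}\,\Gamma}$ (resp.\ $e^{-ik\,\mathrm{vol}\,\Gamma}$) comes only from those Leibniz terms in which, on every internal edge $j$, the amplitude $a_j$ (resp.\ $b_j$) is matched to a coupling row at the endpoint $x=\ell_j$ of $j$ and the amplitude $b_j$ (resp.\ $a_j$) to a row at the endpoint $x=0$. Once this is imposed, every column is assigned to a unique vertex, the remaining sum over matchings decouples vertex by vertex, and the coefficient of $e^{\pm ik\,\mathrm{vol}\,\Gamma}$ equals $\pm\prod_v\det M_v^{\pm}$, where $M_v^{\pm}$ is the $d_v\times d_v$ matrix formed by the coupling rows at $v$ and, as columns, the designated amplitude of each incident internal edge together with $\beta_l$ for each incident halfline, with the phase factored out. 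For standard coupling the continuity rows of $M_v^{\pm}$ form the same bidiagonal block in both cases, and the derivative row has entry $-ik$ in each internal-leg column of the ``$+$'' matrix and $+ik$ in each internal-leg column of the ``$-$'' matrix and in every halfline column; expanding along the derivative row gives $\det M_v^{\pm}=\sum_c(\text{its derivative-row entry in column }c)$, hence $\det M_v^{-}=ik\,d_v$ and $\det M_v^{+}=ik\,(q_v-p_v)$, where $p_v$ and $q_v$ are the numbers of internal and external edges at $v$. Thus the coefficient of $e^{-ik\,\mathrm{vol}\,\Gamma}$ is $\pm(ik)^{|\mathcal V|}\prod_v d_v\not\equiv0$, so $\sigma_-=-\mathrm{vol}\,\Gamma$ (and in particular $F\not\equiv0$), whereas the coefficient of $e^{+ik\,\mathrm{vol}\,\Gamma}$ is $\pm(ik)^{|\mathcal V|}\prod_v(q_v-p_v)$, which vanishes identically precisely when some vertex has $p_v=q_v$, i.e.\ is balanced.

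I expect the main obstacle to be the bookkeeping of the third step: confirming that the extreme monomials are produced only by the matchings described, that the residual sum really factors as a product of per-vertex determinants with an overall sign independent of all the choices, and that the displayed evaluations of $\det M_v^{\pm}$ hold for an arbitrary vertex degree. The counting step is by contrast an off-the-shelf result, although obtaining the error term $\mathcal{O}(1)$ rather than $\mathcal{O}(\log R)$ relies on the coefficients $c_m$ being polynomials, so that the resonances accumulate along finitely many vertical lines.
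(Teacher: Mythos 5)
The paper does not prove this statement; it is quoted from Davies--Pushnitski \cite{DP}, and your proposal is essentially a correct reconstruction of their argument: write the resonance condition as an exponential polynomial, apply the classical zero-counting theorem, and show that the coefficient of $e^{-ik\,\mathrm{vol}\,\Gamma}$ never vanishes while the coefficient of $e^{+ik\,\mathrm{vol}\,\Gamma}$ vanishes exactly when a balanced vertex exists. Your key evaluations check out: ordering the continuity conditions at a vertex as consecutive differences makes all last-row cofactors equal to $1$, so $\det M_v^{\pm}$ is indeed the sum of the derivative-row entries, giving $ik\,d_v$ for the ``$-$'' block (internal legs and halflines all contribute $+ik$) and $ik(q_v-p_v)$ for the ``$+$'' block (internal legs contribute $-ik$, halflines $+ik$); and since all $\ell_j>0$, the exponent $\sum_j\epsilon_j\ell_j$ with $\epsilon_j\in\{-1,0,1\}$ attains $\pm\mathrm{vol}\,\Gamma$ only for the all-$\pm1$ assignment, so the extreme coefficients are exactly these vertex-factorized products. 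Two small points deserve a remark: the identification of each coupling row with a definite edge endpoint fails for a loop (an edge with both ends at one vertex), so one should either exclude loops or subdivide them by a dummy degree-two standard vertex, which changes neither the resonances nor the presence of balanced vertices; and the equality of the resonance multiplicity with the order of the zero of $F$ (needed because $N(R)$ counts with multiplicity) should be stated as part of the setup rather than taken for granted.
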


\section{Pseudo orbit expansion for the resonance condition}
In this section, we lay theoretical grounds for the method of pseudo orbit expansion. The method of pseudo orbit expansion was used to finding the spectrum and trace formula for compact quantum graphs (we refer to \cite{BHJ}). The method was recently adjusted by the author to obtain the resonance condition for graphs with attached halflines \cite{Li1}. We explain here the method, in most cases omitting the proofs, which can be found in the above references.

We define an effective vertex-scattering matrix, which gives effective scattering from the internal edges to other internal edges emanating from a vertex with radiating condition on the halflines.

\begin{definition}
Let us assume a vertex $v$ of the graph with $n$ internal edges, which emanate from this vertex, parametrized by $(0,\ell_j)$ with $x=0$ corresponding to $v$ and $m$ halflines. Let the solution of the Schr\"odinger equation on these internal edges be $f_j(x) = \alpha_j^{\mathrm{in}}\,\mathrm{exp\,}(-ikx)+\alpha_j^{\mathrm{out}}\,\mathrm{exp\,}(ikx)$ and on the external edges $g_s(x) = \beta_s\,\mathrm{exp\,}(ikx)$. Then the effective vertex-scattering matrix $\tilde \sigma^{(v)}$ is given by the relation $\vec \alpha_v^{\mathrm{out}} = \tilde \sigma^{(v)}\vec \alpha_v^{\mathrm{in}}$, where the vectors $\vec \alpha_v^{\mathrm{out}}$ and $\vec \alpha_v^{\mathrm{in}}$ have as entries the above coefficients of the outgoing and incoming waves, respectively.
\end{definition}

\begin{theorem}\label{thm-vertexscat}
The effective vertex-scattering matrix for the vertex $v$ with $n$ internal and $m$ external edges and the standard coupling is $\tilde \sigma^{(v)} = \frac{2}{n+m}J_n-I_n$, where $J_n$ is $n\times n$ matrix with all entries equal to one and $I_n$ is $n\times n$ unit matrix. In particular, for a balanced vertex $\tilde \sigma^{(v)} = \frac{1}{n}J_n-I_n$.
\end{theorem}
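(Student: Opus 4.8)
The plan is to impose the standard coupling conditions at the vertex $v$ directly on the wave coefficients and then eliminate the auxiliary unknowns---the common functional value at $v$ and the external amplitudes $\beta_s$---so as to express $\vec\alpha_v^{\mathrm{out}}$ as a linear function of $\vec\alpha_v^{\mathrm{in}}$.

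First I would record the boundary data at $v$, where $x=0$. On the $j$-th internal edge one has $f_j(0) = \alpha_j^{\mathrm{in}} + \alpha_j^{\mathrm{out}}$ with outgoing derivative $f_j'(0) = ik\,(\alpha_j^{\mathrm{out}} - \alpha_j^{\mathrm{in}})$; on the $s$-th halfline $g_s(0) = \beta_s$ with outgoing derivative $ik\,\beta_s$. The continuity part of the standard condition forces all of these functional values to coincide, so there is a single scalar $c$ with $\alpha_j^{\mathrm{in}} + \alpha_j^{\mathrm{out}} = c$ for every $j$ and $\beta_s = c$ for every $s$.

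Next I would use the Kirchhoff part, that the outgoing derivatives sum to zero. Assuming $k \neq 0$ (the only regime of interest, and the answer will turn out to be $k$-independent) I can cancel the common factor $ik$ and obtain $\sum_{j=1}^n (\alpha_j^{\mathrm{out}} - \alpha_j^{\mathrm{in}}) + \sum_{s=1}^m \beta_s = 0$. Substituting $\beta_s = c$ and $\alpha_j^{\mathrm{out}} = c - \alpha_j^{\mathrm{in}}$ reduces this to $(n+m)\,c = 2\sum_{j=1}^n \alpha_j^{\mathrm{in}}$, whence $c = \frac{2}{n+m}\sum_{j=1}^n \alpha_j^{\mathrm{in}}$. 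Feeding this back into $\alpha_j^{\mathrm{out}} = c - \alpha_j^{\mathrm{in}}$ gives precisely $\vec\alpha_v^{\mathrm{out}} = \bigl(\frac{2}{n+m}J_n - I_n\bigr)\vec\alpha_v^{\mathrm{in}}$, and putting $m = n$ yields the balanced-vertex formula $\frac{1}{n}J_n - I_n$.

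There is really no hard step here; the only things to be careful about are the sign convention in the outgoing derivative on the internal edges (the factor is $ik(\alpha_j^{\mathrm{out}} - \alpha_j^{\mathrm{in}})$, not its negative) and the simultaneous elimination of $c$ and the $\beta_s$. The reason this is shorter than the general argument in \cite{Li1}, which passes through the full vertex-scattering matrix and a Schur-complement reduction over the block of external edges, is that for standard coupling the continuity condition makes the external amplitudes literally equal to the common internal vertex value, so that reduction collapses to the one-line computation above.
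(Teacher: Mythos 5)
Your proof is correct and is essentially the same as the paper's: both impose continuity and the Kirchhoff condition, eliminate the external amplitudes and the common vertex value, and read off $\alpha_i^{\mathrm{out}} = \frac{2}{n+m}\sum_j \alpha_j^{\mathrm{in}} - \alpha_i^{\mathrm{in}}$. The only cosmetic difference is that you name the common functional value $c$, whereas the paper expresses it as $\alpha_i^{\mathrm{out}}+\alpha_i^{\mathrm{in}}$ for a fixed $i$.
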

\begin{proof}
The theorem is proven as corollary 4.3 in \cite{Li1}, we will show here a direct proof. The coupling condition yield
\begin{eqnarray*}
\alpha_j^{\mathrm{out}} + \alpha_j^{\mathrm{in}}  = \alpha_i^{\mathrm{out}} + \alpha_i^{\mathrm{in}} = \beta_s\quad \forall i, j = 1,\dots n, \quad \forall s = 1,\dots, m\,,\\
ik \sum_{j=1}^n (\alpha_j^{\mathrm{out}} - \alpha_j^{\mathrm{in}} )+ ik \sum_{s=1}^m\beta_j = 0\,.
\end{eqnarray*}
Now we fix $i$ and substitute for $\beta_s = \alpha_i^{\mathrm{out}} + \alpha_i^{\mathrm{in}}$ and $\alpha_j^{\mathrm{out}}  = \alpha_i^{\mathrm{out}} + \alpha_i^{\mathrm{in}} - \alpha_j^{\mathrm{in}}$. We obtain
$$
  \sum_{j=1}^n (\alpha_i^{\mathrm{out}} + \alpha_i^{\mathrm{in}} - 2\alpha_j^{\mathrm{in}})+ m (\alpha_i^{\mathrm{out}} + \alpha_i^{\mathrm{in}}) = 0\,.
$$
From this equation we have
$$
  \alpha_i^{\mathrm{out}} = \frac{2}{n+m}\left(\sum_{j=1}^n \alpha_j^{\mathrm{in}}\right) -\alpha_i^{\mathrm{in}}
$$
from which the result follows.
\end{proof}

Now we introduce the oriented graph $\Gamma_2$, which is made from the compact part of the graph $\Gamma$. Each edge is replaced by two oriented edges $b_j$, $\hat b_j$ of the same length and opposite directions. For illustration see figure~\ref{fig1}. We will define the following matrices.

\begin{definition}
The $2N\times 2N$ matrix $\tilde \Sigma$ is a block-diagonalizable matrix written in the basis corresponding to 
$$
  \vec{\alpha} = (\alpha_{b_1}^\mathrm{in},\dots, \alpha_{b_N}^\mathrm{in},\alpha_{\hat{b}_1}^\mathrm{in},\dots,
\alpha_{\hat{b}_N}^\mathrm{in})^\mathrm{T} 
$$
which is block diagonal with blocks $\tilde \sigma_v$ if transformed to the basis
$$
(\alpha_{b_{v_{1}1}}^\mathrm{in},\dots,\alpha_{b_{v_{1}d_1}}^\mathrm{in},\alpha_{b_{v_{2
}1}}^\mathrm{in},\dots,\alpha_{b_{v_{2}d_2}}^\mathrm{in},\dots)^\mathrm{T}\,,
$$
where $b_{v_{1}j}$ is the $j$-th edge ending at the vertex $v_1$. 

Moreover, we define $2N\times 2N$ matrix $Q = \begin{pmatrix}0& I_N\\I_N&0\end{pmatrix}$, the scattering matrix $S = Q \tilde \Sigma$ and 
$L =\mathrm{diag\,}(\ell_1,\dots , \ell_N,\ell_1,\dots , \ell_N)$.
\end{definition}

Note that $\tilde\Sigma$ and $S$ may for general coupling be energy dependent. However, this is not the case for standard coupling, since the matrix $\tilde \sigma$ is not energy dependent. The matrix $S = Q \tilde \Sigma$ is constructed in the following way. We denote its first $N$ rows by $b_1,\dots b_N$ and the other $N$ rows by the edges in the opposite direction $\hat b_1,\dots \hat b_N$; similarly we denote the columns. If $b_j$ ends in the vertex $v$, then we write into the $b_j$-th column and all rows corresponding to oriented edges starting from $v$ the entries of the vertex-scattering matrix $\tilde \sigma^{(v)}$. To the $\hat b_j$-th row we write the diagonal term of $\tilde \sigma^{(v)}$, to the other rows which correspond to the edges emanating from $v$ the nondiagonal terms; in the rest of the rows in this column is zero.

We continue with the theorem which is proven in \cite{Li1}, but the proof is with the exception of the effective vertex-scattering matrix the same as e.g. in \cite{BHJ}.

\begin{theorem}
The resonance condition is 
$$
  \mathrm{det\,}(\mathrm{e}^{ikL} Q \tilde\Sigma - I_{2N}) = 0\,,
$$ 
where $I_{2N}$ is a $2N\times 2N$ unit matrix.
\end{theorem}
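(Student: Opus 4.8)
The plan is to reduce the resonance eigenvalue problem to a linear-algebra condition by writing the generalized eigenfunction explicitly on each internal edge in the basis of plane waves and imposing consistency of the two descriptions coming from the two endpoints of each edge, together with the effective vertex-scattering relations at every vertex.

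First I would parametrise. On each internal edge $e_j$ the solution of $-f''=k^2 f$ is a linear combination of $\mathrm{e}^{ikx}$ and $\mathrm{e}^{-ikx}$; using the two oriented copies $b_j$ and $\hat b_j$ of $e_j$, I record for each oriented edge $b$ the coefficient $\alpha_b^{\mathrm{in}}$ of the wave incoming to the terminal vertex of $b$ and the coefficient $\alpha_b^{\mathrm{out}}$ of the wave outgoing from that vertex. On the halflines I use the radiating ansatz $\beta_s\,\mathrm{e}^{ikx}$ dictated by the definition of a resolvent resonance; this is exactly the setting in which Theorem~\ref{thm-vertexscat} computes the effective vertex-scattering matrix $\tilde\sigma^{(v)}$, the halfline coefficients $\beta_s$ having been eliminated. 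Collecting these relations over all vertices gives $\vec\alpha^{\mathrm{out}} = \tilde\Sigma\,\vec\alpha^{\mathrm{in}}$, where $\tilde\Sigma$ is precisely the block-diagonal (in the vertex basis) matrix of the definition above.

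Second, I would relate $\vec\alpha^{\mathrm{out}}$ back to $\vec\alpha^{\mathrm{in}}$ by transport along the edges. A wave that is outgoing from vertex $v$ along the oriented edge $b_j$ of length $\ell_j$ is, after propagation, incoming at the far endpoint, which is the terminal vertex of the reversed edge $\hat b_j$; the coefficients are related by the phase $\mathrm{e}^{ik\ell_j}$. In the chosen ordering of the basis this bookkeeping is encoded exactly by the permutation $Q=\begin{pmatrix}0&I_N\\ I_N&0\end{pmatrix}$ composed with the diagonal propagation $\mathrm{e}^{ikL}$, so that $\vec\alpha^{\mathrm{in}} = \mathrm{e}^{ikL}Q\,\vec\alpha^{\mathrm{out}}$. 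Substituting the vertex relation gives $\vec\alpha^{\mathrm{in}} = \mathrm{e}^{ikL}Q\tilde\Sigma\,\vec\alpha^{\mathrm{in}}$, i.e. $\bigl(\mathrm{e}^{ikL}Q\tilde\Sigma - I_{2N}\bigr)\vec\alpha^{\mathrm{in}} = 0$. A nonzero generalized eigenfunction exists if and only if this homogeneous system has a nontrivial solution, which is equivalent to the vanishing of the determinant; conversely, from any nonzero solution $\vec\alpha^{\mathrm{in}}$ one reconstructs $\vec\alpha^{\mathrm{out}}$ and the $\beta_s$ and checks that the resulting $f$ genuinely satisfies the coupling conditions, so the equivalence is exact. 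One may replace $\mathrm{e}^{ikL}Q\tilde\Sigma$ by $\mathrm{e}^{ikL}Q\tilde\Sigma$ or equivalently by $Q\tilde\Sigma\,\mathrm{e}^{ikL}$ inside the determinant since $L$ commutes with $Q$ up to the obvious relabelling, giving the stated symmetric-looking form; alternatively one writes $\mathrm{e}^{ikL/2}Q\tilde\Sigma\,\mathrm{e}^{ikL/2}$, whichever is cleaner.

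The main obstacle is the careful bookkeeping: one must verify that the block structure of $\tilde\Sigma$ in the vertex-adapted basis, when written back in the edge basis $(\alpha_{b_1}^{\mathrm{in}},\dots,\alpha_{\hat b_N}^{\mathrm{in}})^{\mathrm T}$, produces exactly the entry pattern described after the definition (diagonal term of $\tilde\sigma^{(v)}$ into the $\hat b_j$-th row, off-diagonal terms into the rows of the other edges emanating from $v$, zeros elsewhere), and that $Q$ and $L$ implement propagation with the correct pairing of $b_j$ with $\hat b_j$. Once the indexing conventions are pinned down this is routine; since the underlying argument is identical to the compact case treated in \cite{BHJ}, with the sole modification being the use of the effective vertex-scattering matrix of Theorem~\ref{thm-vertexscat} in place of the ordinary one, I would present the reduction to the above displayed linear system and then invoke \cite{BHJ, Li1} for the remaining verification.
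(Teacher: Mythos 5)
Your proposal is correct and follows exactly the route the paper relies on: the paper itself omits the proof, referring to [Li1] and noting that the argument is the same as in [BHJ] except that the effective vertex-scattering matrix (Theorem~\ref{thm-vertexscat}) replaces the ordinary one, which is precisely the plane-wave/propagation reduction you describe. The only blemish is the redundant sentence about replacing $\mathrm{e}^{ikL}Q\tilde\Sigma$ by itself, but the underlying commutation of $L$ with $Q$ does hold since $L$ repeats the same diagonal block.
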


Now we define periodic orbits, pseudo orbits and irreducible pseudo orbits. 

\begin{definition}
A \emph{periodic orbit} $\gamma$ on the graph $\Gamma_2$ is a closed path on $\Gamma_2$. A \emph{pseudo orbit} $\tilde \gamma$ is a collection of periodic orbits. An \emph{irreducible pseudo orbit} $\bar \gamma$ is a pseudo orbit, which does not use any directed edge more than once. We define length of a periodic orbit by $\ell_\gamma = \sum_{j, b_j\in \gamma} \ell_j$; the length of pseudo orbit (and hence irreducible pseudo orbit) is the sum of the lengths of the periodic orbits from which it is composed. We define product of scattering amplitudes for a periodic orbit $\gamma = (b_1, b_2, \dots, b_n)$ (it uses first the bond $b_1$, then it continues to $b_2$, etc., it ends in the bond $b_n$ which is connected to $b_1$) as $A_\gamma = S_{b_2b_1}S_{b_3b_2}\dots S_{b_1b_n}$, where $S_{b_2b_1}$ is the entry of the matrix $S$ in the $b_2$-th row and $b_1$-th column. For a pseudo orbit we define $A_{\tilde\gamma} =  \Pi_{\gamma_n\in\tilde \gamma } A_{\gamma_j}$. Finally, by $m_{\tilde \gamma}$ we denote the number of periodic orbits in the pseudo orbit $\tilde \gamma$. 
\end{definition}

Now we restate the previous theorem using irreducible pseudo orbits; the proof can be found in \cite{BHJ}. 

\begin{theorem}\label{thm-rescon2}
The resonance condition is given by the sum over irreducible pseudo orbits
$$
  \sum_{\bar \gamma} (-1)^{m_{\bar \gamma}} A_{\bar \gamma} \,\mathrm{e}^{ik\ell_{\bar \gamma}} = 0\,.
$$ 
\end{theorem}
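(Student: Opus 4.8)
The plan is to derive Theorem~\ref{thm-rescon2} directly from the determinant formulation $\det(\mathrm{e}^{ikL} Q\tilde\Sigma - I_{2N}) = 0$ by expanding the determinant combinatorially. First I would write $\det(I_{2N} - \mathrm{e}^{ikL} Q\tilde\Sigma) = \sum_{\pi} \mathrm{sgn}(\pi)\prod_{j} (I_{2N} - \mathrm{e}^{ikL} S)_{j\,\pi(j)}$, using the notation $S = Q\tilde\Sigma$ from the excerpt, and group the terms according to the set $F \subseteq \{1,\dots,2N\}$ of indices on which $\pi$ acts nontrivially. For each such $\pi$, its restriction to $F$ is a permutation with no fixed points, which decomposes uniquely into disjoint cycles; the off-diagonal matrix entries $(\mathrm{e}^{ikL} S)_{j\,\pi(j)}$ read off along a cycle produce exactly a factor $\mathrm{e}^{ik\ell_\gamma} A_\gamma$ for the corresponding closed path $\gamma$ on $\Gamma_2$, since $(\mathrm{e}^{ikL})$ is diagonal with the bond length in each slot and $S_{b_{i+1} b_i}$ is precisely the scattering amplitude between consecutive directed bonds. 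The diagonal entries outside $F$ each contribute a factor $1$ from the identity, and the overall sign $\mathrm{sgn}(\pi)$ on a union of cycles equals $(-1)^{|F|}(-1)^{(\text{number of cycles})}$ times the sign coming from the $-\mathrm{e}^{ikL}S$ in each entry, which reorganizes into $(-1)^{m_{\tilde\gamma}}$ after cancellation.

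Next I would argue that the collections of disjoint cycles arising this way are exactly the pseudo orbits, and — crucially — that only \emph{irreducible} pseudo orbits survive. A cycle in the permutation picture visits each index (directed bond) at most once because a permutation is injective, and disjoint cycles share no index; hence each directed bond is used at most once across the whole collection, which is precisely the definition of an irreducible pseudo orbit. Conversely every irreducible pseudo orbit, together with a choice of starting point on each constituent periodic orbit, determines such a permutation, and the amplitude $A_{\bar\gamma}$ and length $\ell_{\bar\gamma}$ are independent of these choices (cyclic invariance of the product $S_{b_2b_1}\cdots S_{b_1 b_n}$), so the sum is well defined over irreducible pseudo orbits without overcounting. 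Assembling the sign bookkeeping, each irreducible pseudo orbit $\bar\gamma$ contributes $(-1)^{m_{\bar\gamma}} A_{\bar\gamma}\,\mathrm{e}^{ik\ell_{\bar\gamma}}$, and the empty pseudo orbit contributes the constant $1$ from $\det I_{2N}$, giving the stated identity $\sum_{\bar\gamma}(-1)^{m_{\bar\gamma}} A_{\bar\gamma}\,\mathrm{e}^{ik\ell_{\bar\gamma}} = 0$ once we note that the empty orbit is conventionally absorbed into the sum or listed separately.

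The main obstacle I expect is the careful accounting of signs: one must track the $(-1)$ from each factor $(-\mathrm{e}^{ikL}S)$, the sign of the permutation expressed through its cycle type, and show that these combine to leave exactly $(-1)^{m_{\bar\gamma}}$, with all length-$1$ factors and transposition signs canceling correctly. A secondary subtlety is handling directed bonds that are traversed as a ``backtracking'' pair $b_j,\hat b_j$ versus genuinely distinct bonds, and confirming that the block structure of $\tilde\Sigma$ (equivalently the construction of $S$ from the vertex-scattering matrices $\tilde\sigma^{(v)}$ described before Theorem~\ref{thm-vertexscat}) makes the matrix entry $S_{b' b}$ nonzero exactly when $b'$ can follow $b$ on $\Gamma_2$, so that the combinatorial objects produced by the determinant expansion really are paths on the graph. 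Since this argument is standard — it is the pseudo orbit / Bogomolny–Keating-type expansion of a secular determinant, identical in structure to the compact case in \cite{BHJ} — I would keep the exposition brief and refer to \cite{BHJ} for the detailed sign computation, emphasizing only the points where the presence of halflines changes $S$ (namely through $\tilde\sigma^{(v)}$ rather than the full unitary vertex-scattering matrix).
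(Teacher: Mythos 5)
Your determinant-expansion argument is exactly the proof the paper relies on: the text states Theorem~\ref{thm-rescon2} without proof and refers to \cite{BHJ}, where the secular determinant $\det(I_{2N}-\mathrm{e}^{ikL}Q\tilde\Sigma)$ is expanded over permutations, permutation cycles are identified with bond-disjoint periodic orbits (hence irreducible pseudo orbits), and the sign of an $n$-cycle combines with the $n$ factors of $-\mathrm{e}^{ikL}S$ to give $-1$ per orbit, i.e.\ $(-1)^{m_{\bar\gamma}}$. The one step to tighten is your claim that diagonal entries outside the support contribute a factor $1$: they contribute $1-(\mathrm{e}^{ikL}S)_{jj}$, so $1$-cycles must be retained as length-one periodic orbits on loop bonds, which is vacuous for graphs without loops such as those treated here.
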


Note that in general $A_{\bar \gamma}$ can be energy dependent, but this is not the case for standard coupling. 

\section{Theorems on the effective size of an equilateral graph}
Now we focus on equilateral graphs~-- the graphs which have all the internal edges of the length $\ell$. First, we give a theorem how to find the effective size of an equilateral graph. Then we introduce a method how to reduce the number of oriented edges of the graph $\Gamma_2$ for an equilateral graph with standard coupling and balanced vertices. Both theorems were proven in \cite{Li1}.

\begin{theorem}\label{thm-effsize}
Let us assume an equilateral graph (internal edges of lengths $\ell$). Then the effective size is $\frac{\ell}{2}n_{\mathrm{nonzero}}$, where $n_{\mathrm{nonzero}}$ is the number of nonzero eigenvalues of the matrix $S = Q\tilde \Sigma$. 
\end{theorem}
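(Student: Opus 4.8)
The plan is to turn the resonance condition into a polynomial equation in the single variable $z=\mathrm{e}^{ik\ell}$ and then to count the preimages of its roots under the map $k\mapsto\mathrm{e}^{ik\ell}$. Since the graph is equilateral we have $L=\ell\,I_{2N}$, hence $\mathrm{e}^{ikL}=\mathrm{e}^{ik\ell}I_{2N}$, and the resonance condition $\det(\mathrm{e}^{ikL}Q\tilde\Sigma-I_{2N})=0$ becomes $\det(\mathrm{e}^{ik\ell}S-I_{2N})=0$ with $S=Q\tilde\Sigma$ independent of $k$ (which holds for standard coupling). Writing $z=\mathrm{e}^{ik\ell}$ and letting $\mu_1,\dots,\mu_{2N}$ be the eigenvalues of $S$ repeated according to algebraic multiplicity, the identity $\det(zS-I_{2N})=\prod_{j=1}^{2N}(1-z\mu_j)$, which holds for all $z$ since both sides are polynomials agreeing for $z\neq0$ (from $\det(zS-I_{2N})=z^{2N}\det(z^{-1}I_{2N}-S)$ and the characteristic polynomial of $S$), exhibits the secular function as a polynomial $F(z):=\det(zS-I_{2N})$ with $F(0)=\det(-I_{2N})=1$. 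The factors with $\mu_j=0$ are constant, so $F$ has degree \emph{exactly} $n_{\mathrm{nonzero}}$ and all of its roots $z_1,\dots,z_{n_{\mathrm{nonzero}}}$, listed with multiplicity, are nonzero.

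Next I would pass back to the $k$-plane. For a root $z_i=r_i\,\mathrm{e}^{i\theta_i}$ with $r_i>0$, the solution set of $\mathrm{e}^{ik\ell}=z_i$ consists of the points with $\mathrm{Re}\,k\in\frac{1}{\ell}(\theta_i+2\pi n)$, $n$ an integer, and $\mathrm{Im}\,k=-\frac{1}{\ell}\ln r_i$; this is an arithmetic progression of spacing $2\pi/\ell$ on a fixed horizontal line. Because $k\mapsto\mathrm{e}^{ik\ell}$ is a local biholomorphism, the order of vanishing of $F(\mathrm{e}^{ik\ell})$ at each such $k$ equals the multiplicity of $z_i$ as a root of $F$; hence, using that the multiplicity of a resonance is the order of vanishing of the secular function (cf.\ \cite{Li1}), each of the $n_{\mathrm{nonzero}}$ roots (counted with multiplicity) contributes one such horizontal lattice of resonances, and distinct roots contribute disjoint sets since $\mathrm{e}^{ik\ell}$ is single-valued. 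A one-dimensional lattice-point estimate then gives, for each root, $\#\{k:|k|<R,\ \mathrm{e}^{ik\ell}=z_i\}=\frac{\ell}{\pi}\sqrt{R^2-c_i^2}+\mathcal{O}(1)=\frac{\ell}{\pi}R+\mathcal{O}(1)$ with $c_i=-\frac{1}{\ell}\ln r_i$, the $\mathcal{O}(1)$ being uniform in $R$ because there are finitely many roots. Summing over the $n_{\mathrm{nonzero}}$ roots yields $N(R)=\frac{\ell}{\pi}\,n_{\mathrm{nonzero}}\,R+\mathcal{O}(1)=\frac{2}{\pi}\bigl(\frac{\ell}{2}n_{\mathrm{nonzero}}\bigr)R+\mathcal{O}(1)$, so the effective size equals $\frac{\ell}{2}n_{\mathrm{nonzero}}$. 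As a consistency check, $S$ is $2N\times2N$, so $n_{\mathrm{nonzero}}\le2N$ and the effective size is at most $N\ell=\mathrm{vol}\,\Gamma$, in agreement with the Davies--Pushnitski bound.

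The individual steps are routine; the point that needs genuine care is the degree claim — that the zero eigenvalues of $S$ really drop out, so that $F$ has degree \emph{exactly} $n_{\mathrm{nonzero}}$ and, in particular, $F(0)\neq0$ (no spurious resonances from a root at $z=0$) — together with the bookkeeping that the multiplicity of a resonance $k_0$ coincides with the root multiplicity of $z_0=\mathrm{e}^{ik_0\ell}$ in $F$. Once these are settled, the asymptotics are the standard lattice count carried out above.
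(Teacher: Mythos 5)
Your proof is correct. The paper itself states Theorem~\ref{thm-effsize} without proof, deferring to \cite{Li1}, and your argument is the standard one used there: for an equilateral graph the secular function becomes the polynomial $F(z)=\det(zS-I_{2N})=\prod_j(1-z\mu_j)$ in $z=\mathrm{e}^{ik\ell}$, whose degree is exactly $n_{\mathrm{nonzero}}$ with $F(0)=1$, and each nonzero root contributes a horizontal lattice of resonances with density $\frac{\ell}{\pi}R+\mathcal{O}(1)$; you also correctly flag the only delicate points (the degree drop from zero eigenvalues and the matching of multiplicities under the local biholomorphism $k\mapsto\mathrm{e}^{ik\ell}$).
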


\begin{theorem}
Let us assume an equilateral graph $\Gamma$ for which no edge starts and ends in one vertex and no two vertices are connected by more than one edge. Let us assume standard coupling and let there be a balanced vertex $v_2$ in which directed edges $b_1, b_2, \dots, b_d$ end. Then the following construction does not change the resonance condition. We delete the directed edge $b_1$ of the graph $\Gamma_2$, which starts in the vertex $v_1$, and replace it by ``ghost edges'' $b_1', b_1'', \dots, b_1^{(d-1)}$, where the ``ghost edge'' $b_1^{(j)}$ starts in the vertex $v_1$ and continues to the directed edge $b_{j+1}$. Contribution of the irreducible pseudo orbit containing ``ghost edge'' $b_1'$ to the resonance condition given by theorem~\ref{thm-rescon2} is the following. The ghost edge does not contribute to the length of the pseudo orbit. The scattering amplitude from the bond $b$, which ends in $v_1$, to the bond $b_2$ is equal to the scattering amplitude from $b$ to $b_1$ taken with the opposite sign. Every ``ghost edge'' can be in the irreducible pseudo orbit used only once. Similarly, one can delete more edges; for each balanced vertex we delete an edge which ends in this vertex.
\end{theorem}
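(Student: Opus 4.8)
The plan is to start from the resonance condition in its determinantal form: with $M(k):=I_{2N}-\mathrm{e}^{ikL}S$, the condition is $\det M(k)=0$, and $\det M(k)$ equals, by the cycle (Leibniz) expansion of a determinant, the pseudo orbit sum of Theorem~\ref{thm-rescon2}. I would reduce $\det M(k)$, by determinant-preserving row and column operations, to the determinant of the $(2N-1)\times(2N-1)$ matrix obtained by deleting the row and the column labelled $b_1$; since no scalar factor is picked up, the two determinants have the same zeros. Finally I would recognise that reduced matrix as $I_{2N-1}-\mathrm{e}^{ikL'}S'$, where $S'$ is the bond-scattering matrix of $\Gamma_2$ after $b_1$ is replaced by the ghost edges, so that the cycle expansion of the smaller determinant is exactly the announced sum.

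The only input from the balanced vertex is one linear relation among the columns of $M(k)$. As $b_1,\dots,b_d$ all end at $v_2$, the columns of $S$ labelled by them are supported on the rows $\hat b_1,\dots,\hat b_d$ only, with $S_{\hat b_i,b_j}=(\tilde\sigma^{(v_2)})_{ij}=\tfrac1d-\delta_{ij}$ by Theorem~\ref{thm-vertexscat}; since every row of $\tfrac1d J_d-I_d$ sums to zero, $\sum_{j=1}^{d}S_{\cdot,b_j}=0$, hence $\sum_{j=1}^{d}M(k)_{\cdot,b_j}=\sum_{j=1}^{d}e_{b_j}$, the coordinate vectors of the edges ending at $v_2$. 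I would then perform two operations, each leaving the determinant unchanged. First, add columns $b_2,\dots,b_d$ to column $b_1$; by the relation just noted the new column $b_1$ becomes $\sum_{j=1}^{d}e_{b_j}$. Second, use that $b_1$ starts at $v_1\ne v_2$ and that the graph has no multiple edges to see that $M(k)_{b_1,s}=\delta_{b_1 s}-\mathrm{e}^{ik\ell_1}S_{b_1,s}$ vanishes for every $s\in\{b_2,\dots,b_d\}$ and is nonzero only for $s=b_1$ or $s$ a directed edge ending at $v_1$; subtracting row $b_1$ from each of the rows $b_2,\dots,b_d$ then clears the entries $(b_j,b_1)$, leaves the columns $b_2,\dots,b_d$ untouched, and alters, in the rows $b_2,\dots,b_d$, only those columns $s$ that end at $v_1$. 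Column $b_1$ now equals $e_{b_1}$; expanding along it (cofactor $+1$) gives $\det M(k)=\det R(k)$, where $R(k)$ is indexed by the directed edges of $\Gamma_2$ other than $b_1$ and
\[
 R(k)_{r,s}=M(k)_{r,s}\quad\text{if }r\notin\{b_2,\dots,b_d\},\qquad
 R(k)_{b_j,s}=M(k)_{b_j,s}-M(k)_{b_1,s}\quad(2\le j\le d).
\]

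Then I would interpret $R(k)$. As $\Gamma$ has no loops, $S$ has zero diagonal, so $R(k)$ has all diagonal entries $1$ and can be written $R(k)=I_{2N-1}-\mathrm{e}^{ikL'}S'$ with $L'$ the restriction of $L$; for every row other than $b_2,\dots,b_d$ one gets $S'=S$, while for $2\le j\le d$ a short computation gives $(\mathrm{e}^{ikL'}S')_{b_j,s}=\mathrm{e}^{ik\ell_j}S_{b_j,s}-\mathrm{e}^{ik\ell_1}S_{b_1,s}$. Here $S_{b_j,s}$ is the scattering amplitude from $s$ into $b_j$ at the starting vertex $w_j$ of $b_j$, whereas $S_{b_1,s}$ is that from $s$ into $b_1$ at $v_1$; since the graph has no multiple edges, $w_j\ne v_1$, so for a fixed $s$ at most one of the two terms is nonzero. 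Because $\Gamma$ is equilateral, $\mathrm{e}^{ik\ell_1}=\mathrm{e}^{ik\ell_j}$ and $S'_{b_j,s}=S_{b_j,s}-S_{b_1,s}$: besides the usual arcs $s\to b_j$ of amplitude $S_{b_j,s}$ (the edge $b_j$ carrying the length $\ell$), there is for each directed edge $s$ ending at $v_1$ an arc $s\to b_j$ of amplitude $-S_{b_1,s}$ carrying no further length --- precisely the transitions $s\to b_1^{(j-1)}\to b_j$ through the ghost edge $b_1^{(j-1)}$ that starts at $v_1$ and feeds into $b_j$, the ghost contributing no length because the phase $\mathrm{e}^{ik\ell}$ remains attached to the surviving edge $b_j$. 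Feeding $\det R(k)=\det(I_{2N-1}-\mathrm{e}^{ikL'}S')$ into the same cycle expansion of a determinant as in Theorem~\ref{thm-rescon2} gives the sum over irreducible pseudo orbits of the ghost-augmented graph, and since each pseudo orbit enters any vertex $b_j$ at most once, each ghost edge is used at most once. For several balanced vertices I would iterate: the column index sets involved are pairwise disjoint, and one checks by the same computation (now using the balancedness of the next vertex) that the relation ``the columns over the internal edges of a balanced vertex sum to the corresponding indicator vector'' survives the deletion performed at the previous one, so the step repeats verbatim.

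The two determinant-preserving operations are routine; the step I expect to be the main obstacle --- and which is really the content of the theorem --- is the final identification of $R(k)$ with the bond-scattering matrix of the ghost-augmented graph. One must check that the two contributions to $R(k)_{b_j,s}$ always come from disjoint sets of edges $s$ (this is where ``no multiple edges'' is used), that the propagation phase $\mathrm{e}^{ik\ell}$ ends up on the surviving real edge and not on the ghost (this is where equilaterality is used), that the cofactor sign and the normalisation produce no spurious scalar factor, and that the combinatorics of ``each ghost used once'' and of the iteration over several balanced vertices come out exactly as stated.
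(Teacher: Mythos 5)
The paper itself does not prove this theorem; it explicitly defers both theorems of Section~4 to [Li1], so there is no in-text proof to compare against. Judged on its own terms, your argument is correct and essentially complete. The key facts you use all check out: the columns of $S$ indexed by $b_1,\dots,b_d$ are supported on the rows $\hat b_1,\dots,\hat b_d$ and sum to zero because each row of $\tfrac1d J_d-I_d$ sums to zero; the column addition turns column $b_1$ into $\sum_j e_{b_j}$; the absence of loops gives $v_1\neq v_2$ so that row $b_1$ of $M(k)$ vanishes on the columns $b_2,\dots,b_d$ and the row subtractions leave those columns intact; and the cofactor in the expansion along the resulting column $e_{b_1}$ is $+1$, so no scalar factor appears and the resonance condition is genuinely unchanged (not merely its zero set). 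Your identification of $R(k)_{b_j,s}=\delta_{b_js}-\mathrm{e}^{ik\ell}\bigl(S_{b_j,s}-S_{b_1,s}\bigr)$ with the ghost-edge rule is also right, and you correctly isolate where each hypothesis enters: ``no multiple edges'' guarantees the two terms live on disjoint sets of source bonds $s$ (otherwise the combined entry would not be a single signed amplitude), equilaterality lets the phase $\mathrm{e}^{ik\ell}$ attach to the surviving bond $b_j$ so the ghost carries no length, and the fact that a permutation uses each row index at most once yields ``each ghost edge used at most once.'' The check that the zero-column-sum relation at the remaining balanced vertices survives the row modifications (because $b_1,\dots,b_d$ end at $v_2$ and hence are disjoint from the bonds ending at any other balanced vertex) is the right thing to verify for the iteration, and it goes through as you sketch. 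The one step you label as routine but should still write out is the final passage from $\det\bigl(I_{2N-1}-\mathrm{e}^{ikL'}S'\bigr)$ back to the pseudo-orbit sum of Theorem~\ref{thm-rescon2}, since $S'$ is no longer a unitary bond-scattering matrix of a bona fide graph but of the ghost-augmented object; the Leibniz/cycle expansion is purely combinatorial, so this causes no difficulty, but it is where the stated ``contribution of an irreducible pseudo orbit containing a ghost edge'' is actually read off.
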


\section{Example 1: two abscissas and two halflines}

\begin{figure}
\centering
\includegraphics[width=0.85\textwidth]{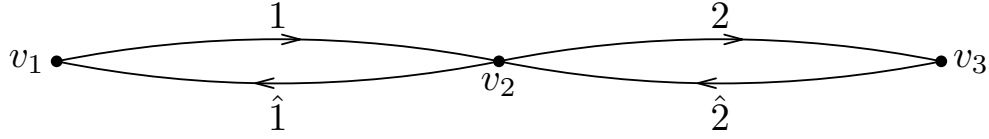}
\caption{Graph $\Gamma_2$ for two abscissas and two halflines. The halflines were ``cut off'' and each internal edge of the graph $\Gamma$ was replaced by two oriented edges.}
\label{fig1}
\end{figure} 

In the following sections we use previous theorems in several simple examples. Graph in the first example consists of two internal edges of length $\ell$ connected in one vertex with two halflines. There is Dirichlet coupling ($f(0) = 0$) at the spare ends (vertices $v_1$ and $v_3$) of the abscissas and standard coupling in the central vertex ($v_2$). The oriented graph $\Gamma_2$ is shown in figure~\ref{fig1}. Since the vertex $v_2$ is balanced, we have by theorem~\ref{thm-vertexscat} $\tilde \sigma^{(v_2)} = \frac{1}{2}\begin{pmatrix}-1&1\\1&-1\end{pmatrix}$. The vertex-scattering matrices for the vertices $v_1$ and $v_3$ are $\tilde \sigma^{(v_1)} = \tilde \sigma^{(v_3)} = -1$. The matrix $S = Q\tilde \Sigma$ is
$$
  \begin{array}{c|cccc}&1&2&\hat 1&\hat 2\\ \hline
       1 & 0 & 0 & -1 & 0\\
       2 & 1/2 & 0 & 0 & -1/2\\
  \hat 1 & -1/2 & 0 & 0 & 1/2\\
  \hat 2 & 0 & -1 & 0 & 0\\
  \end{array}\,.
$$ 
We explicitly mark the edges to which the columns and rows correspond. Eigenvalues of $S$ are $-1$, 1 and 0 with multiplicity 2. From theorem~\ref{thm-effsize} we obtain that the effective size of the graph is $\ell$.

Now we find the resonance condition using pseudo orbits. The contribution of the pseudo orbits which do not include any bond is 1. Clearly, there are no irreducible pseudo orbits on one or three bonds. Let us look at the contribution of the irreducible pseudo orbits on two edges, i.e. find the coefficient by $\mathrm{exp}(2ik\ell)$. We have two irreducible pseudo orbits $(1,\hat 1)$ and $(2, \hat 2)$. The scattering amplitude between $\hat 1$ and $1$ is $-1$, the scattering amplitude between $1$ and $\hat 1$ is $-1/2$. There is one periodic orbit in the pseudo orbit $(1,\hat 1)$, hence there is a factor of $(-1)^{1}$. The contribution of the irreducible pseudo orbit $(1,\hat 1)$ is $(-1)(-1/2)(-1) = -1/2$, similarly for the pseudo orbit $(2, \hat 2)$. Hence the coefficient by $\mathrm{exp}(2ik\ell)$ is $-1$. Finally, we find the contribution the irreducible pseudo orbits on four edges. There are two irreducible pseudo orbits: $(1,2,\hat 2, \hat 1)$ and $(1,\hat 1)(2,\hat 2)$; the former consists of one periodic orbit, the latter of two periodic orbits. The contribution of the irreducible pseudo orbit $(1,2,\hat 2, \hat 1)$ is $(-1)^2(1/2)^2(-1) = -1/4$, the contribution of the irreducible pseudo orbit $(1,\hat 1)(2,\hat 2)$ is $(-1)^2(-1/2)^2(-1)^2 = 1/4$. Hence the coefficient by $\mathrm{exp}(4ik\ell)$ is 0, because contribution of the two irreducible pseudo orbits cancel. The resonance condition is $1-\mathrm{exp}(2ik\ell) = 0$.

\begin{figure}
\centering
\includegraphics[width=0.85\textwidth]{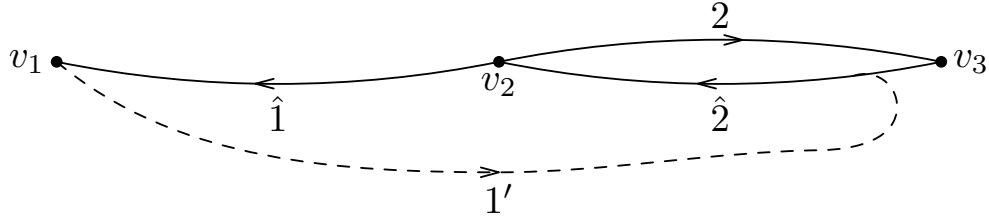}
\caption{Graph $\Gamma_2$ after deleting the edge 1}
\label{fig2}
\end{figure} 

Since the vertex $v_2$ is balanced, we can also delete one bond which ends in this vertex, say bond 1. We replace it by a ``ghost edge'' $1'$ which starts at $v_1$ and continues to the only other edge which ends in $v_2$, the bond $\hat 2$ (see figure~\ref{fig2}). Now we can do pseudo orbit expansion again. The contribution of pseudo orbit which does not include any bond is 1. We have two irreducible pseudo orbits on two ``non-ghost'' edges $(\hat 1,1',\hat 2)$ and $(2, \hat 2)$. The former has contribution $1\cdot 1/2 (-1)^1 = -1/2$ (we have used the fact that the scattering amplitude between $\hat 1$ and $\hat 2$ is $+1$, because the scattering amplitude between $\hat 1$ and 1 was $-1$), the contribution of the latter is $(-1)(-1/2)(-1)^1 = -1/2$. Again we obtain that the coefficient by $\mathrm{exp}(2ik\ell)$ is $-1$. There is no irreducible pseudo orbit, which would use all the three remaining ``non-ghost'' edges, even if it would use the ``ghost edge''. Clearly, there cannot be also any pseudo orbit on four edges, because we have deleted one. Again, we obtain the same resonance condition. In this case it was easier to find that the coefficient by $\mathrm{exp}(4ik\ell)$ is zero.

We conclude that the resolvent resonances in this case are only eigenvalues $\lambda = k^2$ with $k = n \pi$, $n\in \mathbb{Z}$.

\section{Example 2: triangle with attached halflines}

\begin{figure}
\begin{minipage}[b]{0.45\linewidth}
\centering
\includegraphics[width=\textwidth]{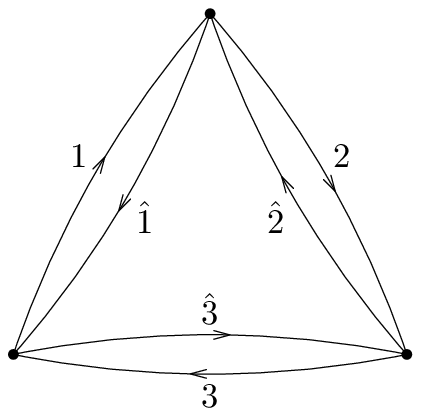}
\caption{Graph $\Gamma_2$ for the triangle with attached halflines}
\label{fig3}
\end{minipage}
\hspace{0.5cm}
\begin{minipage}[b]{0.45\linewidth}
\centering
\includegraphics[width=\textwidth]{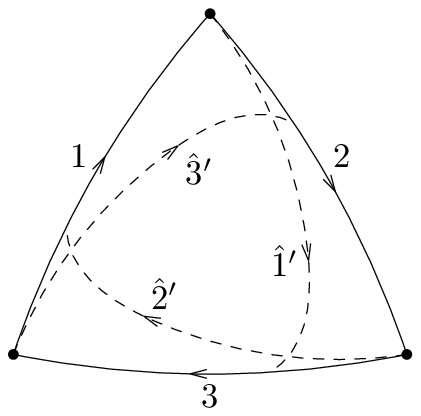}
\caption{Graph $\Gamma_2$ for the triangle with attached halflines after deleting the edges $\hat 1$, $\hat 2$, $\hat 3$}
\label{fig4}
\end{minipage}
\end{figure} 

Let us consider a graph with three internal edges of the lengths $\ell$ in the triangle. To each vertex two halflines are attached, so every vertex is balanced. The graph $\Gamma_2$ is shown in figure~\ref{fig3}. The vertex scattering matrices are in all the vertices $\tilde \sigma^{(v)} = \frac{1}{2}\begin{pmatrix}-1&1\\1&-1\end{pmatrix}$. The matrix $S = Q\tilde \Sigma$ is
$$
  \begin{array}{c|cccccc}&1&2&3&\hat 1&\hat 2&\hat 3\\ \hline
       1 & 0 & 0 & 1/2 & -1/2 & 0 & 0\\
       2 & 1/2 & 0 & 0 & 0 & -1/2 & 0\\
       3 & 0 & 1/2 & 0 & 0 & 0 & -1/2\\
  \hat 1 & -1/2 & 0 & 0 & 0 & 1/2 & 0\\
  \hat 2 & 0 & -1/2 & 0 & 0 & 0 & 1/2\\
  \hat 3 & 0 & 0 & -1/2 & 1/2 & 0 & 0\\
  \end{array}\,.
$$ 
Its eigenvalues are 1, $-1/2$ with multiplicity 2 and 0 with multiplicity 3. Hence the effective size of this graph is $3\ell/2$.

Now we find the contributions of the pseudo orbits to the resonance condition. There is no irreducible pseudo orbit on 1 edge and on 5 edges. We have the following three irreducible pseudo orbits on two edges: $(1, \hat 1)$; $(2, \hat 2)$; $(3, \hat 3)$. There are two irreducible pseudo orbits on three edges $(1,2,3)$ and $(\hat 1, \hat 3, \hat 2)$, six on four edges $(1,\hat 1)(2,\hat 2)$; $(1,\hat 1)(3,\hat 3)$; $(3,\hat 3)(2,\hat 2)$; $(1,2,\hat 2, \hat 1)$; $(2,3,\hat 3, \hat 2)$; $(3,1, \hat 1,\hat 3)$ and eight irreducible pseudo orbits on six edges $(1,\hat 1)(2,\hat 2)(3,\hat 3)$; $(1,2,\hat 2, \hat 1)(3,\hat 3)$; $(2,3,\hat 3, \hat 2)(1,\hat 1)$; $(3,1,\hat 1, \hat 3)(2,\hat 2)$; $(1,2,3)(\hat 1, \hat 3,\hat 2)$; $(1,2,3,\hat 3, \hat 2,\hat 1)$; $(2,3, 1, \hat 1, \hat 3, \hat 2)$; $(3,1,2,\hat 2,\hat 1,\hat 3)$. Below, we compute their contributions
\begin{eqnarray*}
\mathrm{exp\,}0 :& &1\,,\\
\mathrm{exp\,}(2ik\ell) :& & \left(-\frac{1}{2}\right)^2 (-1)^1 \cdot 3 = -\frac{3}{4}\,,\\
\mathrm{exp\,}(3ik\ell) :& &\left(\frac{1}{2}\right)^3 (-1)^1 \cdot 2 = -\frac{1}{4}\,,\\
\mathrm{exp\,}(4ik\ell) :& &\left(-\frac{1}{2}\right)^4 (-1)^2 \cdot 3 + \left(-\frac{1}{2}\right)^2\left(\frac{1}{2}\right)^2 (-1)^1 \cdot 3 = 0\,,\\
\mathrm{exp\,}(6ik\ell) :& &\left(-\frac{1}{2}\right)^6 (-1)^3 + \left(-\frac{1}{2}\right)^2\left(\frac{1}{2}\right)^2 \left(-\frac{1}{2}\right)^2 (-1)^2 \cdot 3 +\\
 & & + \left(\frac{1}{2}\right)^3\left(\frac{1}{2}\right)^3 (-1)^2 + \left(-\frac{1}{2}\right)^2\left(\frac{1}{2}\right)^4  (-1)^1 \cdot 3 = 0\,.\\
\end{eqnarray*}
The resonance condition is 
$$
  1-\frac{3}{4}\,\mathrm{exp\,}(2ik\ell)-\frac{1}{4}\,\mathrm{exp\,}(3ik\ell) = 0\,.
$$

The alternative way of constructing the resonance condition is using the method of deleting the edges. We have three balanced vertices, hence we can delete the edges $\hat 1$, $\hat 2$ and $\hat 3$ and replace them by ``ghost edges'' $\hat 1'$, $\hat 2'$ and $\hat 3'$ (see figure~\ref{fig4}). It is clear that there are no irreducible pseudo orbits on 4, 5 or 6 edges and since no ``ghost edge'' continues to an edge ending in a vertex from which this ``ghost edge'' starts, there are also no pseudo orbits on one edge. There are three irreducible pseudo orbits on two ``non-ghost edges'' $(1,2,\hat 2')$; $(2,3,\hat 3')$; $(3,1,\hat 1')$ and there are two irreducible pseudo orbits on three edges $(1,2,3)$ and $(1,\hat 1',3,\hat 3',2,\hat 2')$. Their contributions are
\begin{eqnarray*}
\mathrm{exp\,}0 :& &1\,,\\
\mathrm{exp\,}(2ik\ell) :& & \left(\frac{1}{2}\right)^2 (-1)^1 \cdot 3 = -\frac{3}{4}\,,\\
\mathrm{exp\,}(3ik\ell) :& &\left(\frac{1}{2}\right)^3 (-1)^1 \cdot 2 = -\frac{1}{4}\,,\\
\end{eqnarray*}
which gives the same resonance condition. The resonances are such $\lambda = k^2$ with $k = 2n\pi/\ell$ and $k = (\pi + 2 n \pi-i \ln{2})/\ell$ with multiplicity two, $n\in \mathbb{Z}$.

\section{Example 3: square with attached halflines}

\begin{figure}
\begin{minipage}[b]{0.45\linewidth}
\centering
\includegraphics[width=\textwidth]{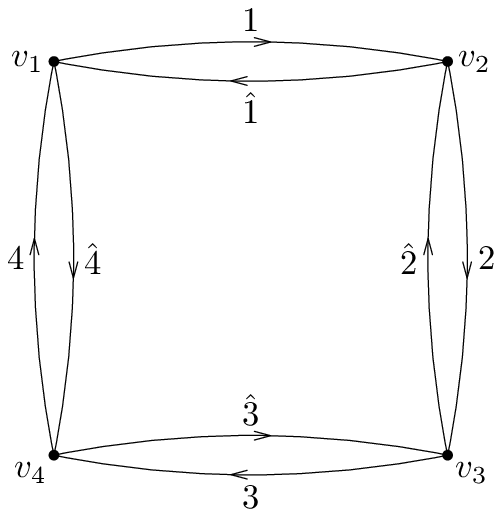}
\caption{Graph $\Gamma_2$ for the square with attached halflines}
\label{fig5}
\end{minipage}
\hspace{0.5cm}
\begin{minipage}[b]{0.45\linewidth}
\centering
\includegraphics[width=\textwidth]{fig6}
\caption{Graph $\Gamma_2$ for the square with attached halflines after deleting the edges $\hat 1$, $\hat 2$, $\hat 3$, $\hat 4$}
\label{fig6}
\end{minipage}
\end{figure} 

We consider a square of the edges of lengths $\ell$, in each vertex two halflines are attached, hence every vertex is balanced. The graph $\Gamma_2$ is in figure~\ref{fig5}. The vertex-scattering matrices are again $\tilde \sigma^{(v)} = \frac{1}{2}\begin{pmatrix}-1&1\\1&-1\end{pmatrix}$. The matrix $S = Q\tilde\Sigma$ is 
$$
  \begin{array}{c|cccccccc}&1&2&3&4&\hat 1&\hat 2&\hat 3&\hat 4\\ \hline
       1 & 0 & 0 & 0 & 1/2 & -1/2 & 0 & 0 & 0\\
       2 & 1/2 & 0 & 0 & 0 & 0 & -1/2 & 0 & 0\\
       3 & 0 & 1/2 & 0 & 0 & 0 & 0 & -1/2 & 0\\
       4 & 0 & 0 & 1/2 & 0 & 0 & 0 & 0 & -1/2\\
  \hat 1 & -1/2 & 0 & 0 & 0 & 0 & 1/2 & 0 & 0\\
  \hat 2 & 0 & -1/2 & 0 & 0 & 0 & 0 & 1/2 & 0\\
  \hat 3 & 0 & 0 & -1/2 & 0 & 0 & 0 & 0 & 1/2\\
  \hat 4 & 0 & 0 & 0 & -1/2 & 1/2 & 0 & 0 & 0\\
  \end{array}\,,
$$ 
its eigenvalues are $1$, $-1$ and 0 with multiplicity 6; the effective size is $\ell$.

One can see that there are no irreducible pseudo orbits on odd number of edges. The irreducible pseudo orbits on two edges are $(1,\hat 1)$; $(2,\hat 2)$; $(3,\hat 3)$; $(4,\hat 4)$, on four edges $(1,\hat 1)(2,\hat 2)$; $(1,\hat 1)(3,\hat 3)$; $(1,\hat 1)(4,\hat 4)$; $(2,\hat 2)(3,\hat 3)$; $(2,\hat 2)(4,\hat 4)$; $(3,\hat 3)(4,\hat 4)$; $(1,2,\hat 2, \hat 1)$; $(2,3,\hat 3, \hat 2)$; $(3,4,\hat 4, \hat 3)$; $(4,1,\hat 1, \hat 4)$; $(1,2,3,4)$; $(\hat 4,\hat 3,\hat 2,\hat 1)$, on six edges $(1,\hat 1)(2,\hat 2)(3,\hat 3)$; $(1,\hat 1)(2,\hat 2)(4,\hat 4)$; $(1,\hat 1)(3,\hat 3)(4,\hat 4)$; $(2,\hat 2)(3,\hat 3)(4,\hat 4)$; $(1,2,\hat 2,\hat 1)(3,\hat 3)$; $(1,2,\hat 2,\hat 1)(4,\hat 4)$;\\
 $(2,3,\hat 3,\hat 2)(1,\hat 1)$;  $(2,3,\hat 3,\hat 2)(4,\hat 4)$; $(3,4,\hat 4,\hat 3)(1,\hat 1)$; $(3,4,\hat 4,\hat 3)(2,\hat 2)$; $(4,1,\hat 1,\hat 4)(2,\hat 2)$;\\
 $(4,1,\hat 1,\hat 4)(3,\hat 3)$; $(1,2,3,\hat 3,\hat 2, \hat 1)$; $(2,3,4,\hat 4,\hat 3,\hat 2)$; $(3,4,1,\hat 1,\hat 4, \hat 3)$; $(4,1,2,\hat 2, \hat 1,\hat 4)$ and on eight edges $(1,\hat 1)(2,\hat 2)(3,\hat 3)(4,\hat 4)$; $(1,2,\hat 2,\hat 1)(3,\hat 3)(4,\hat 4)$; $(2,3,\hat 3,\hat 2)(1,\hat 1)(4,\hat 4)$;\\
 $(3,4,\hat 4,\hat 3)(1,\hat 1)(2,\hat 2)$; $(4,1,\hat 1,\hat 4)(2,\hat 2)(3,\hat 3)$; $(1,2,\hat 2,\hat 1)(3,4,\hat 4,\hat 3)$; $(2,3,\hat 3,\hat 1)(4,1,\hat 1,\hat 4)$;\\
 $(1,2,3,\hat 3, \hat 2,\hat 1)(4,\hat 4)$; $(2,3,4,\hat 4, \hat 3,\hat 2)(1,\hat 1)$; $(3,4,1,\hat 1, \hat 4,\hat 3)(2,\hat 2)$; $(4,1,2,\hat 2, \hat 1,\hat 4)(3,\hat 3)$;\\
 $(1,2,3,4)(\hat 4,\hat 3,\hat 2,\hat 1)$; $(1,2,3,4,\hat 4,\hat 3,\hat 2,\hat 1)$; $(2,3,4,1,\hat 1,\hat 4,\hat 3,\hat 2)$; $(3,4,1,2,\hat 2,\hat 1,\hat 4,\hat 3)$;\\
 $(4,1,2,3,\hat 3,\hat 2,\hat 1,\hat 4)$. Their contributions to the resonance conditions are
\begin{eqnarray*}
\mathrm{exp\,}0 :& &1\,,\\
\mathrm{exp\,}(2ik\ell) :& & \left(-\frac{1}{2}\right)^2(-1)^1\cdot 4 = -1\,,\\
\mathrm{exp\,}(4ik\ell) :& & \left(-\frac{1}{2}\right)^2\left(-\frac{1}{2}\right)^2(-1)^2\cdot 6+ \left(-\frac{1}{2}\right)^2\left(\frac{1}{2}\right)^2(-1)^1\cdot 4+ \left(\frac{1}{2}\right)^4(-1)^1\cdot 2 = 0\,\\
\mathrm{exp\,}(6ik\ell) :& & \left(-\frac{1}{2}\right)^2\left(-\frac{1}{2}\right)^2\left(-\frac{1}{2}\right)^2(-1)^3\cdot 4+\left(-\frac{1}{2}\right)^2\left(\frac{1}{2}\right)^2\left(-\frac{1}{2}\right)^2(-1)^2\cdot 8+\\
& & + \left(-\frac{1}{2}\right)^2\left(\frac{1}{2}\right)^4(-1)^1 \cdot 4 = 0\,,\\
\mathrm{exp\,}(8ik\ell) :& & \left(-\frac{1}{2}\right)^2\left(-\frac{1}{2}\right)^2\left(-\frac{1}{2}\right)^2\left(-\frac{1}{2}\right)^2(-1)^4+\left(-\frac{1}{2}\right)^2\left(\frac{1}{2}\right)^2\left(-\frac{1}{2}\right)^2\left(-\frac{1}{2}\right)^2\cdot\\
& & \cdot(-1)^3\cdot 4+\left(-\frac{1}{2}\right)^2\left(\frac{1}{2}\right)^2\left(-\frac{1}{2}\right)^2\left(\frac{1}{2}\right)^2(-1)^2\cdot 2 + \left(-\frac{1}{2}\right)^2\left(\frac{1}{2}\right)^4\cdot\\
& & \cdot\left(-\frac{1}{2}\right)^2(-1)^2\cdot 4+ \left(\frac{1}{2}\right)^4\left(-\frac{1}{2}\right)^4(-1)^2+\left(-\frac{1}{2}\right)^2\left(\frac{1}{2}\right)^6(-1)^1\cdot 4 =0\,.\\
\end{eqnarray*}

\begin{figure}
\centering
\includegraphics[width=0.55\textwidth]{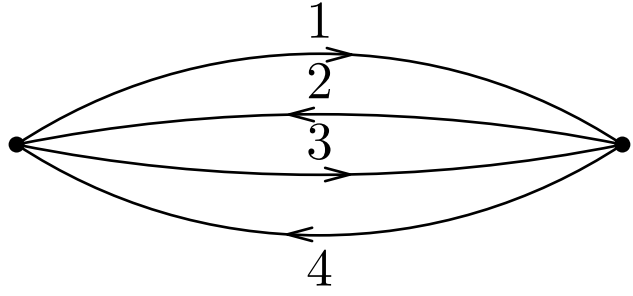}
\caption{Graph $\Gamma_2$ simplified after deleting the edges}
\label{fig7}
\end{figure} 

If we delete the edges $\hat 1$, $\hat 2$, $\hat 3$ and $\hat 4$, we obtain figure~\ref{fig6}. Note that the scattering amplitude for path from 1 to 2 is the same as the scattering amplitude from 1 to 4 through $\hat 1'$; in both cases we obtain $1/2$. Similarly for other vertices, so the graph is equivalent to a graph in figure~\ref{fig7}, where all the scattering amplitudes are $1/2$. This simplifies finding the resonance condition. We have four irreducible pseudo orbits on two edges $(12)$; $(14)$; $(32)$; $(34)$ and four on four edges $(12)(34)$; $(14)(32)$; $(1234)$; $(1432)$. Their contribution is 
\begin{eqnarray*}
\mathrm{exp\,}0 :& &1\,,\\
\mathrm{exp\,}(2ik\ell) :& & \left(\frac{1}{2}\right)^2(-1)^1\cdot 4 = -1\,,\\
\mathrm{exp\,}(4ik\ell) :& & \left(\frac{1}{2}\right)^2\left(\frac{1}{2}\right)^2(-1)^2\cdot 2+ \left(\frac{1}{2}\right)^4(-1)^1\cdot 2 = 0\,.\\
\end{eqnarray*}

The resonance condition is $1-\mathrm{exp\,}(2ikl) = 0$, the positions of resonances are $\lambda = k^2$ with $k = n\pi/\ell$, $n\in \mathbb{Z}$.

\section*{Acknowledgements}
Support of the grant 15-14180Y of the Grant Agency of the Czech Republic is acknowledged. The author thanks to R.~Band for a useful discussion and the referee for useful comments.


\begin{thebibliography}{BHJ12} 

\bibitem[BHJ12]{BHJ}
R.~Band, J.\,M.~Harrison, C.\,H.~Joyner, Finite pseudo orbit expansions for spectral quantities of quantum graphs, {\em J. Phys. A: Math. Theor.\/} {\bf  45}, 325204 (2012). DOI: 10.1088/1751-8113/45/32/325204
  
\bibitem[DEL10]{DEL}
E.\,B.~Davies, P.~Exner, J.~Lipovsk\'{y}, Non-{W}eyl asymptotics for quantum graphs with general coupling conditions, {\em J. Phys. A: Math. Theor.\/} {\bf 43}, 474013 (2010). DOI: 10.1088/1751-8113/43/47/474013
  
\bibitem[DP11]{DP}
E.\,B.~Davies, A.~Pushnitski, Non-{W}eyl resonance asymptotics for quantum graphs, {\em Analysis and PDE\/} {\bf 4}, 729--756 (2011). DOI: 10.2140/apde.2011.4.729

\bibitem[EL07]{EL1}
P.~Exner, J.~Lipovsk\'{y}, Equivalence of resolvent and scattering resonances on quantum graphs, in {\em Adventures in Mathematical Physics (Proceedings, Cergy-Pontoise 2006)\/} vol {\bf 447} (Providence, R.I.), pp 73--81 (2007). DOI: 10.1090/conm/447

\bibitem[EL11]{EL3}
P.~Exner, J.~Lipovsk\'{y}, Non-{W}eyl resonance asymptotics for quantum graphs in a magnetic field, {\em Phys. Lett. A\/} {\bf 375}, 805--807 (2011). DOI: 10.1016/j.physleta.2010.12.042

\bibitem[KS03]{KoS3}
T.~Kottos, U.~Smilansky, Quantum graphs: a simple model for chaotic scattering, {\em J. Phys. A: Math. Gen.} {\bf 36}, 3501-3524, (2003). DOI: 10.1088/0305-4470/36/12/337

\bibitem[KS04]{KSc}
T.~Kottos, H.~Schanz, Statistical properties of resonance width for open quantum systems {\em Waves in Random Media\/} {\bf 14}, S91--S105 (2004). DOI: 10.1088/0959-7174/14/1/013

\bibitem[Lip15]{Li1}
J.~Lipovsk\'{y}, On the effective size of a non-Weyl graph, arXiv:1507.04176 [math-ph] (2015).

\end{thebibliography}
\end{document}